\newtheorem{theorem}{Theorem}
\newtheorem{corollary}[theorem]{Corollary}
\newtheorem{definition}[theorem]{Definition}
\newtheorem{lemma}[theorem]{Lemma}
\newtheorem{remark}[theorem]{Remark}
\newtheorem{assumption}[theorem]{Assumption}
\newenvironment{proof}[1][Proof]{\textbf{#1.} }{\ \rule{0.5em}{0.5em}}
\begin{document}

\title{Nonlinear Autoregression with Convergent Dynamics on Novel Computational Platforms}

\author{Jiayin Chen and Hendra I. Nurdin\thanks{J. Chen and H. I. Nurdin are with the School of Electrical Engineering and 
Telecommunications,  UNSW Australia,  Sydney NSW 2052, Australia (\texttt{email: jiayin.chen@unsw.edu.au,h.nurdin@unsw.edu.au})}}


\maketitle

\begin{abstract}
Nonlinear stochastic modeling is useful for describing complex engineering systems. Meanwhile, neuromorphic (brain-inspired) computing paradigms are developing to tackle tasks that are challenging and resource intensive on digital computers. An emerging scheme is reservoir computing which exploits nonlinear dynamical systems for temporal information processing. This paper introduces reservoir computers with output feedback as stationary and ergodic infinite-order nonlinear autoregressive models. We highlight the versatility of this approach by employing classical and quantum reservoir computers to model synthetic and real data sets, further exploring their potential for control applications.
\end{abstract}

\begin{IEEEkeywords}
Nonlinear stochastic modeling; Convergent dynamics; Reservoir computing; Quantum dynamical systems.
\end{IEEEkeywords}

\section{Introduction}
\label{sec:intro}
The on-going quest for modeling complex systems has motivated a fruitful development in nonlinear stochastic modeling \cite{fan2008nonlinear,ljung2010perspectives}. The threshold model has been applied for ecology and hydrology times series and the autoregressive conditional heteroscedastic model is useful for volatility clustering \cite{fan2008nonlinear}. Well-known nonlinear system identification models include the Volterra series \cite{boyd1985fading}, neural networks \cite{kumpati1990identification}, nonlinear autoregressive exogenous models \cite{leontaritis1985input} and block-oriented models \cite{schoukens2017identification}. 

To solve tasks that are costly on digital computers, neuromorphic computing imitates human learning with the energy efficiency of the human brain. An emerging neuromorphic scheme is reservoir computing (RC) \cite{tanaka2019recent,nakajima2021reservoir}, which exploits nonlinear dynamics (the ``reservoir'') to process time-varying input signals. In this work, we establish a  theoretical framework for using reservoir computers (also abbreviated as RCs) as \textit{infinite-order nonlinear autoregressive models with exogenous inputs, or NARX($\infty$) models} for applications such as time series modelling and system identification. Such NARX($\infty$) can be also expressed as \textit{infinite-order nonlinear moving average models with exogenous inputs, or NMAX($\infty$) models.} Exploiting this equivalence, we show that such NARX($\infty$) models are asymptotically stationary and ergodic in the sense of Birkhoff-Khinchin \cite[Theorem~24.1]{billingsley2008probability}.

The attractiveness of our approach is that any nonlinear reservoir dynamics with the convergence property can induce stationary and ergodic NARX($\infty$). The reservoir dynamics are often randomly chosen but fixed at the outset and do not require precise tuning, only a linear output function is optimized to approximate target outputs. Echo-state networks (ESNs), a pioneering software RC implementation, has demonstrated remarkable performance in chaotic system modeling \cite{pathak2018model} and time series modeling \cite{kim2020time}. Their training efficiency enables RCs for fast signal processing. An FPGA RC reached 160 MHz rate for chaotic dynamics prediction and a photonic RC classified speech at a million words per second. RCs are also used in edge computing to reduce computation and transmission overhead; see \cite{tanaka2019recent,nakajima2021reservoir} and references therein. Quantum reservoir computers (QRCs) have recently been proposed to harness nonlinear quantum dynamics \cite{fujii2017harnessing, chen2019towards, chen2020temporal}. The micro-second quantum dynamics and their low energy requirements make them of interest as RC hardware \cite{markovic2020quantum}.

Central to our development is the convergence property \cite{pavlov2008convergent,tran2018convergence} of a dynamical system, also known as the echo-state property \cite{grigoryeva2018echo} in the RC literature. Roughly speaking, a convergent dynamic has a unique reference state solution defined and bounded both backwards and forwards in time. All other solutions asymptotically converge to the reference state solution, independent of their initial conditions.

References \cite{grigoryeva2016reservoir,gonon2019reservoir} investigate RCs for forecasting, reconstruction and filtering under stationary inputs. Ref.~\cite{bollt2021explaining} explores RC properties that make them effective in stochastic modeling tasks. Different from these previous works, we develop a general theory for realizing NARX($\infty$) models with RCs, taking into account the stability of the model, and conditions for the asymptotic stationarity and ergodicity. Ref.~\cite{grigoryeva2016reservoir} also considers optimizing the reservoir parameters to maximize the RC's information processing capacity via a Taylor expansion. Although we do not consider reservoir design problem here, it will be an interesting future research theme continuing from this work.


To highlight the versatility of our proposal, we employ ESNs and QRCs to model  data sets collected from diverse fields of interest. 
We cast parameter estimation for these RCs as convex optimization problems. 
Numerical experiments indicate that ESNs and QRCs with only a few tunable parameters are able to describe these data sets.

This paper is organized as follows. Sec.~\ref{sec:rc} introduces RCs and the convergence property. Sec.~\ref{sec:NARX} presents NARX($\infty$) models realized by convergent RCs and establishes their stationarity and ergodicity. Sec.~\ref{sec:para_est} details parameter estimation for ESNs and QRCs as NARX($\infty$). Sec.~\ref{sec:ts} presents numerical experiments. A conclusion is presented in Sec.~\ref{sec:conclu}.

\textbf{Notations}: $\mathbb{R}$ ($\mathbb{Z}$) are reals (integers), $\mathbb{Z}_{-} = \{ \ldots, -1, 0\}$ and $\mathbb{Z}_{+} = \{0, 1, \ldots\}$. $ x \in \mathbb{R}^n$ is an $n \times 1$ vector and $x^\top$ is its transpose. $(\mathbb{R}^n)^{\mathbb{Z}}$ is the set of infinite sequences, i.e., $u \in (\mathbb{R}^n)^{\mathbb{Z}}$ with $u = \{u_k\}_{k \in \mathbb{Z}}$ and $u_k \in \mathbb{R}^n$. $(\mathbb{R}^n)^{\mathbb{Z}_{-}}$ is the set of left-infinite sequences. $P^{\mathbb{Z}_{-}}_n: (\mathbb{R}^{n})^\mathbb{Z} \rightarrow (\mathbb{R}^n)^{\mathbb{Z}_{-}}$ is the projection. For any $\tau \in \mathbb{Z}$, $z^{-\tau}_n$ is the time shift operator, i.e., for any $u \in (\mathbb{R}^n)^{\mathbb{Z}}$ and $k \in \mathbb{Z}$, $z^{-\tau}_n(u) \rvert_k \coloneqq u_{k-\tau}$ (we drop the subscript and superscript $n$ when $n=1$). $\| \cdot \|$ is the Euclidean norm.

\section{Reservoir computing}
\label{sec:rc}
We consider RCs described by reservoir dynamics (also called the activation function in machine learning) $f: \mathbb{R}^N \times \mathbb{R}^n \times \mathbb{R} \rightarrow \mathbb{R}^N$ and output function $h: \mathbb{R}^N \rightarrow \mathbb{R}$, for all $k \in \mathbb{Z}$
\begin{equation} \label{eq:ol-rc}
\begin{cases}
x_{k+1}  = f(x_{k}, u_{k}, e_{k}), \\
\hspace*{1.1em} \hat{y}_{k} = h(x_{k}),
\end{cases}
\end{equation}
Here, $x_k \in \mathbb{R}^N$ is the state, $u_k \in \mathbb{R}^n$ is the input, $e_k \in \mathbb{R}$ models an external noise, and $\hat{y}_k \in \mathbb{R}$ is the RC output.
The dynamics $f = f_\gamma$ is often parametrized by a parameter $\gamma \in \mathbb{R}^{p}$ chosen according to the task. Here, we consider an arbitrary but fixed $f$ by uniformly randomly choosing $\gamma$ and fixing it at the onset. Only $h$ is optimized to match the target outputs. More generally, $\gamma$ can be optimized according to some criterion, a problem known as the reservoir design problem \cite{grigoryeva2016reservoir}, as alluded to previously.

\subsection{The convergence property} 
\label{sec:cv}
RCs described by \eqref{eq:ol-rc} with the convergence property \cite{pavlov2008convergent,tran2018convergence} induce input-output maps, mapping from $u \in (\mathbb{R}^n)^{\mathbb{Z}}, e \in \mathbb{R}^{\mathbb{Z}}$ to $\hat{y} \in \mathbb{R}^{\mathbb{Z}}$. Let $\phi(k; k_0, \xi)$ be a solution to \eqref{eq:ol-rc} parameterized by $u$ and $e$, starting at time $k_0$ with initial condition $x_{k_0} = \xi $. That is, for all $k \geq k_0$, $\phi(k+1; k_0, \xi) = f(\phi(k; k_0, \xi), u_k, e_k)$ and $\phi(k_0; k_0, \xi) = \xi$. A function $\beta: [0, \infty) \times \mathbb{Z}_{+} \rightarrow \mathbb{R}$ is $\mathcal{KL}$ if $\beta(0, \cdot)=0$, continuous and strictly increasing in the first argument, non-increasing in the second argument with $\lim_{t \rightarrow \infty} \beta(s, t) = 0 $ for all $s \in [0, \infty)$ \cite{lin1996smooth}. 

\begin{definition}[Convergence property \cite{pavlov2008convergent,tran2018convergence}] \label{defn:cv}
An RC described by \eqref{eq:ol-rc} has the convergence property (or is convergent) if for any $u \in (\mathbb{R}^n)^{\mathbb{Z}}$ and $e\in \mathbb{R}^\mathbb{Z}$,
\begin{itemize}
 \item[(i)] there exists a unique and bounded solution $x^{*} \in (\mathbb{R}^{N})^{\mathbb{Z}}$ to \eqref{eq:ol-rc} that satisfies $x^*_{k+1}=f(x^*_k, u_k ,e_k)$ for all $k \in \mathbb{Z}$ and $\sup_{k \in \mathbb{Z}}\| x^*_k \| < \infty$;
 \item[(ii)] there exists $\beta \in \mathcal{KL}$ (independent of $u, e$) such that, for any $k, k_0 \in \mathbb{Z}$ with $k \geq k_0$ and any $ \xi \in \mathbb{R}^{N}$,
\begin{equation} \label{eq:defn-cv}
\| x^*_k - \phi(k; k_0, \xi) \| \leq \beta(\| x^*_{k_0} - \xi \|, k-k_0).
\end{equation} 
\end{itemize}
\end{definition}
Note that the convergence property is a property of $f$.
The unique and bounded solution $x^*$ in Definition~\ref{defn:cv} is the reference state solution (determined by $u, e$ and $f$). Equation \eqref{eq:defn-cv} imposes that as $k_0 \rightarrow -\infty$, any solution $\phi(k; k_0, \xi)$ to \eqref{eq:ol-rc} asymptotically converges to the reference state solution $x^*$, independent of its initial condition $\xi$. 

The following theorem provides sufficient conditions to ensure that \eqref{eq:ol-rc} is convergent, which will be employed in parameter estimation of NARX($\infty$) models realized by convergent feedback RC dynamics; see Sec.~\ref{sec:para_est} below.

\begin{theorem} \cite[Theorem~13]{tran2018convergence}  \label{theorem:cv}
An RC with a compact state-space described by \eqref{eq:ol-rc} is convergent if there exists some $P = P^\top, P >0$ and some $\theta \in (0, 1)$ (independent of $u$ and $e$) such that for any $u_k \in \mathbb{R}^n$, $e_k \in \mathbb{R}$ and any $x_1, x_2 \in \mathbb{R}^{N}$,
\begin{equation} \label{eq:theorem-cv1}
\| f(x_1, u_k, e_k) - f(x_2, u_k, e_k) \|_P \leq \theta \| x_1 - x_2 \|_P,
\end{equation}
where $\| x \|_P \coloneqq \sqrt{x^\top P x}$.
\end{theorem}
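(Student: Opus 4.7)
The plan is to exploit the uniform contraction \eqref{eq:theorem-cv1} along forward trajectories and combine it with compactness of the state-space to build the reference solution $x^*$ by a Cauchy-sequence construction. The $\mathcal{KL}$-bound then follows from the same contraction by translating between $\|\cdot\|_P$ and the Euclidean norm, which are equivalent on $\mathbb{R}^N$.

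First, iterating \eqref{eq:theorem-cv1} one step at a time yields, for any fixed $u,e$, any $k_0 \in \mathbb{Z}$, $k \geq k_0$ and any $\xi_1, \xi_2 \in \mathbb{R}^N$,
\begin{equation*}
\| \phi(k;k_0,\xi_1) - \phi(k;k_0,\xi_2) \|_P \leq \theta^{k-k_0} \| \xi_1 - \xi_2 \|_P.
\end{equation*}
To establish (i), fix any point $\xi_0$ in the compact state-space $X$ (so that $\phi(k;k_0,\xi_0) \in X$ for all $k \geq k_0$) and show that the net $\{\phi(k;k_0,\xi_0)\}_{k_0 \leq k}$ is Cauchy in $\|\cdot\|_P$ as $k_0 \to -\infty$. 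For $k_0' \leq k_0 \leq k$, the semigroup identity $\phi(k;k_0',\xi_0) = \phi(k;k_0,\phi(k_0;k_0',\xi_0))$ together with the previous display gives
\begin{equation*}
\| \phi(k;k_0,\xi_0) - \phi(k;k_0',\xi_0) \|_P \leq \theta^{k-k_0}\, \| \xi_0 - \phi(k_0;k_0',\xi_0) \|_P \leq \theta^{k-k_0}\, D,
\end{equation*}
where $D := \sup_{x,y \in X} \| x-y \|_P < \infty$ by compactness. Hence the limit $x^*_k := \lim_{k_0 \to -\infty} \phi(k;k_0,\xi_0)$ exists in $X$; the same estimate shows that the limit does not depend on $\xi_0$. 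Lipschitz continuity of $f$ in its first argument (an immediate consequence of \eqref{eq:theorem-cv1}) lets me pass to the limit in $\phi(k+1;k_0,\xi_0) = f(\phi(k;k_0,\xi_0),u_k,e_k)$, giving $x^*_{k+1} = f(x^*_k, u_k, e_k)$, while boundedness follows from $x^*_k \in X$. Uniqueness among bounded solutions is again immediate: if $y^*$ is another such solution, $\| x^*_k - y^*_k \|_P \leq \theta^{k-k_0} \| x^*_{k_0} - y^*_{k_0} \|_P \to 0$ as $k_0 \to -\infty$.

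For (ii), I instantiate the first display with $\xi_1 = x^*_{k_0}$ and $\xi_2 = \xi$, noting that $\phi(k;k_0,x^*_{k_0}) = x^*_k$ since $x^*$ is itself a trajectory, to obtain $\| x^*_k - \phi(k;k_0,\xi) \|_P \leq \theta^{k-k_0}\| x^*_{k_0} - \xi \|_P$. Using $\sqrt{\lambda_{\min}(P)}\,\|x\| \leq \|x\|_P \leq \sqrt{\lambda_{\max}(P)}\,\|x\|$ and setting $\kappa := \sqrt{\lambda_{\max}(P)/\lambda_{\min}(P)}$, this converts to $\| x^*_k - \phi(k;k_0,\xi) \| \leq \kappa\, \theta^{k-k_0} \| x^*_{k_0} - \xi \|$, so $\beta(s,t) := \kappa\, \theta^{t}\, s$ is a $\mathcal{KL}$ function independent of $u$ and $e$, yielding \eqref{eq:defn-cv}.

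The main obstacle I anticipate is the Cauchy-completeness step behind the existence of $x^*$: a priori the solutions $\phi(k;k_0,\xi)$ are defined only for $k \geq k_0$, so one does not have a two-sided trajectory in hand, and it is precisely the compactness of the state-space (which bounds $\|\xi_0 - \phi(k_0;k_0',\xi_0)\|_P$ uniformly in $k_0, k_0'$) combined with the uniform geometric contraction that makes the backwards limit $k_0 \to -\infty$ well-defined. Everything else reduces to iterating a one-step contraction and the standard equivalence of norms on $\mathbb{R}^N$.
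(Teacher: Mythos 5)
The paper does not prove this statement at all: it is imported verbatim from the cited reference \cite[Theorem~13]{tran2018convergence}, so there is no in-paper argument to compare against. Your proof is correct and is the standard pullback-contraction construction: iterating \eqref{eq:theorem-cv1} gives geometric incremental contraction in $\|\cdot\|_P$, compactness (with the implicit forward invariance of the state-space) bounds $\|\xi_0-\phi(k_0;k_0',\xi_0)\|_P$ uniformly so the backwards limit defining $x^*$ exists, Lipschitz continuity of $f(\cdot,u_k,e_k)$ lets you pass to the limit in the recursion, uniqueness among bounded two-sided solutions follows by letting $k_0\to-\infty$, and the norm-equivalence constant $\kappa=\sqrt{\lambda_{\max}(P)/\lambda_{\min}(P)}$ turns the $P$-norm estimate into the Euclidean $\mathcal{KL}$ bound $\beta(s,t)=\kappa\,\theta^t s$ required by Definition~\ref{defn:cv}. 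This is essentially the argument one finds in the convergent-systems literature; no gaps.
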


\subsection{Filters and functionals}
\label{subsec:filters-functionals}
If an RC governed by \eqref{eq:ol-rc} is convergent, then it induces a unique time-invariant and causal filter $U_{f, h} : (\mathbb{R}^n)^{\mathbb{Z}} \times \mathbb{R}^\mathbb{Z} \rightarrow \mathbb{R}^{\mathbb{Z}}$ such that when evaluated at any time $k \in \mathbb{Z}$, $\hat{y}_k =  U_{f, h}(u, e)\rvert_{k} \coloneqq h(x^*_k)$, where $x^*$ is the reference state solution to \eqref{eq:ol-rc} (see \cite{grigoryeva2018echo,chen2019learning}). There is a bijection between $U_{f, h}$ and its associated functional $F_{f, h}: (\mathbb{R}^n)^{\mathbb{Z}_{-}} \times \mathbb{R}^{\mathbb{Z}_{-}} \rightarrow \mathbb{R}$, defined as $F_{f, h}(u', e') \coloneqq U_{f, h}(\tilde{u}', \tilde{e}')\rvert_{0}$ \cite{boyd1985fading}. Here $\tilde{u}', \tilde{e}'$ are arbitrary extensions of $u', e'$ to $(\mathbb{R}^n)^{\mathbb{Z}}, \mathbb{R}^\mathbb{Z}$. We can recover $U_{f,h}$ from $F_{f, h}$ via $U_{f, h}(u, e)\rvert_{k} =  F_{f, h}  (P_{n}^{\mathbb{Z}_{-}} \circ z_{n}^{-k}(u), P^{\mathbb{Z}_{-}} \circ z^{-k}(e))$ for any $ k \in \mathbb{Z}$. This bijection allows us to establish the measurability of $U_{f, h}$ by showing that $F_{f,h}$ is measurable in Lemma~\ref{lemma:narx}, and the Birkhoff-Khinchin ergodicity of $U_{f, h}$ by showing that $F_{f, h}$ is integrable in Lemma~~\ref{lemma:stationary}.

\section{NARX($\infty$) models by convergent dynamics}
\label{sec:NARX}
We are interested in implementing NARX($\infty$) models using RCs with output-feedback (also see Fig.~\ref{fig:cl_rc}), for all $k \in \mathbb{Z}$
\begin{equation} \label{eq:cl-rc}
\begin{cases}
x_{k+1} = g(x_k, u_k, y_k), \\
\hspace*{1.1em} \hat{y}_k = h(x_k),
\end{cases}
\end{equation}
where $x_k \in \mathbb{R}^N$ is the state and $u \in (\mathbb{R}^n)^\mathbb{Z}$ is the input. The target output $y_k \in \mathbb{R}$ is related to the one-step ahead prediction $\hat{y}_k$ via $y_k = \hat{y}_k + e_k$, where $e_k \in \mathbb{R}$ is a noise source for the model. Later on, we will consider $e$ and $u$ modeled by discrete-time stochastic processes. We also consider an equivalent representation of \eqref{eq:cl-rc} given by \eqref{eq:ol-rc}
where $u$ and $e$ are viewed as external inputs and
\begin{equation} \label{eq:f}
f(x_k, u_k, e_k) \coloneqq g(x_k, u_k, h(x_k) +e_k).
\end{equation}
We say that~\eqref{eq:cl-rc} is convergent if and only if \eqref{eq:ol-rc} is convergent.
In system identification, $u$ is often designed to be persistently exciting \cite{ljung2010perspectives} to excite all modes of the plant. 
In time series modeling, RCs make (one-step ahead) predictions $\hat{y}_k$ of $y_k$ based on a single sample path $y$ from the generating model, and are not driven by input $u$. That is, \eqref{eq:cl-rc} becomes
\begin{equation*}
\begin{cases}
x_{k+1} = g(x_k, y_k) , \\ 
\hat{y}_k = h(x_k).
\end{cases}
\end{equation*}
\begin{figure}[!ht]
\centering
\includegraphics[scale=0.8]{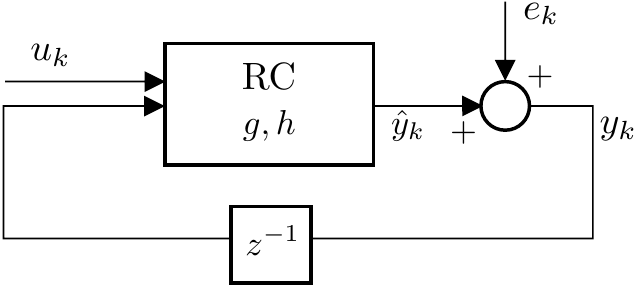}
\caption{Schematic of RCs operating in an output-feedback configuration described by \eqref{eq:cl-rc}, where $z^{-1}$ is the one-step time delay operator.}
\label{fig:cl_rc}
\end{figure}

This section introduces a probabilistic framework and shows that RCs described by \eqref{eq:cl-rc} (under certain conditions on $f$ and $h$) implement NARX($\infty$) models defined later in \eqref{eq:y-narx}. We then show that outputs of the NARX($\infty$) models are stationary and ergodic. The following result will be central.

\begin{theorem} \label{theorem:limit}
Consider a convergent RC described equivalently by \eqref{eq:cl-rc} or \eqref{eq:ol-rc}. Let $U_{f, h}$ be the unique filter induced by \eqref{eq:ol-rc}. If $h$ is uniformly continuous, then for any $u \in (\mathbb{R}^{n})^\mathbb{Z}$, $e \in \mathbb{R}^\mathbb{Z}$ and any $k \in \mathbb{Z}$, 
\begin{equation} \label{eq:nmax}
\begin{split}
y_k & = U_{f, h}(u, e) \rvert_{k} + e_k \\
& = h \circ f(x_{k-1}, u_{k-1}, e_{k-1}) + e_k \\
& = h \circ f( f(x_{k-2}, u_{k-2}, e_{k-2}), u_{k-1}, e_{k-1}) + e_k \\
&  \hspace*{10em} \vdots \\
& = \mathcal{F}(u_{k-1}, u_{k-2}, \ldots, e_{k-1}, e_{k-2}, \ldots) + e_k,
\end{split}
\end{equation}
where the following point-wise limit
\begin{equation} \label{eq:nmax-limit}
\begin{split}
& \hspace*{-0.7em}  U_{f, h}(u, e)\rvert_k = \mathcal{F}(u_{k-1}, u_{k-2}, \ldots, e_{k-1}, e_{k-2}, \ldots ) \\
& \hspace*{-0.7em} \coloneqq \lim_{k_0 \rightarrow -\infty}  h \circ f(\ldots f(f(\xi, u_{k_0}, e_{k_0}), u_{k_0+1}, e_{k_0+1}) \ldots)
\end{split}
\end{equation}
exists and is independent of the initial condition $\xi \in \mathbb{R}^N$. 
\end{theorem}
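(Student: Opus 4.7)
The plan is to read off the chain of equalities directly from the convergence property and the definition of the induced filter, and then establish the point-wise limit in \eqref{eq:nmax-limit} by squeezing it with the $\mathcal{KL}$-bound from Definition~\ref{defn:cv}(ii).

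First, I would unpack the setup. By Definition~\ref{defn:cv}(i), there exists a unique bounded reference state trajectory $x^{*} \in (\mathbb{R}^N)^\mathbb{Z}$ satisfying $x^{*}_{k+1} = f(x^{*}_k, u_k, e_k)$ for every $k \in \mathbb{Z}$, and by construction of $U_{f,h}$ this gives $U_{f,h}(u,e)\rvert_k = h(x^{*}_k)$. The relation $y_k = \hat{y}_k + e_k$ from the output-feedback model \eqref{eq:cl-rc} then yields the first equality in \eqref{eq:nmax}, and iterating $x^{*}_k = f(x^{*}_{k-1}, u_{k-1}, e_{k-1})$ inside $h$ produces the successive lines in \eqref{eq:nmax} as pure unfoldings of the recursion.

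The substantive step is the limit \eqref{eq:nmax-limit}. Given any $\xi \in \mathbb{R}^N$ and any $k_0 \leq k$, observe that the nested composition on the right of \eqref{eq:nmax-limit} is exactly $h(\phi(k;k_0,\xi))$ where $\phi$ is the solution to \eqref{eq:ol-rc} started at $\xi$ at time $k_0$. Applying Definition~\ref{defn:cv}(ii),
\begin{equation*}
\| x^{*}_k - \phi(k;k_0,\xi)\| \leq \beta(\|x^{*}_{k_0} - \xi\|, k-k_0).
\end{equation*}
Here I would exploit the boundedness guaranteed by Definition~\ref{defn:cv}(i): with $C \coloneqq \sup_{j \in \mathbb{Z}} \|x^{*}_j\| < \infty$, the first argument is uniformly bounded by $C + \|\xi\|$ for all $k_0$. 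Since $\beta$ is non-decreasing in its first argument and $\lim_{t\to\infty} \beta(s,t) = 0$ for each $s$, fixing $k$ and sending $k_0 \to -\infty$ forces $\beta(\|x^{*}_{k_0}-\xi\|, k-k_0) \leq \beta(C+\|\xi\|, k-k_0) \to 0$. Hence $\phi(k;k_0,\xi) \to x^{*}_k$ in norm.

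Uniform continuity of $h$ then transfers this convergence onto outputs: $h(\phi(k;k_0,\xi)) \to h(x^{*}_k) = U_{f,h}(u,e)\rvert_k$ as $k_0 \to -\infty$, which is exactly \eqref{eq:nmax-limit}. Independence from $\xi$ is automatic because the limit coincides with $h(x^{*}_k)$, and $x^{*}$ depends only on $(u,e,f)$. The one potential pitfall I would flag is making the boundedness of the first argument of $\beta$ explicit; the $\mathcal{KL}$ definition only gives asymptotic decay in the second argument for each fixed first argument, so condition (i) of convergence is essential to avoid having a possibly unbounded $\|x^{*}_{k_0}-\xi\|$ that could defeat the $t \to \infty$ decay. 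Plain continuity at $x^{*}_k$ would in fact suffice for this theorem, but uniform continuity is what the paper's downstream measurability and integrability arguments (Lemmas~\ref{lemma:narx} and~\ref{lemma:stationary}) presumably call on, so I would keep it as stated.
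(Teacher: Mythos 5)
Your proof is correct, and it takes a slightly different (and arguably cleaner) route than the paper's. The paper fixes $k$ and $\xi$ and shows that $\{\phi(k;k_0,\xi)\}_{k_0\leq k}$ is a \emph{Cauchy} sequence, by inserting $x^*_k$ via the triangle inequality and bounding $\|\phi(k;k_0,\xi)-\phi(k;k_0',\xi)\|$ by $\beta(R_\xi,k-k_0)+\beta(R_\xi,k-k_0')$; uniform continuity of $h$ then transfers the Cauchy property to $h\circ\phi$, and a separate, analogous triangle-inequality argument is run to show independence of $\xi$. You instead show directly that $\phi(k;k_0,\xi)\to x^*_k$ using the single estimate $\|x^*_k-\phi(k;k_0,\xi)\|\leq\beta(\|x^*_{k_0}-\xi\|,k-k_0)\leq\beta(C+\|\xi\|,k-k_0)\to 0$, where boundedness of $x^*$ from Definition~\ref{defn:cv}(i) caps the first argument of $\beta$. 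This buys you two things at once: the limit is immediately \emph{identified} as $h(x^*_k)=U_{f,h}(u,e)\rvert_k$ (the paper's Cauchy argument only yields existence and leaves the identification implicit), and independence of $\xi$ is automatic rather than requiring a second estimate. Your side remarks are also accurate: the uniform bound on $\beta$'s first argument is exactly where condition (i) is needed (the paper's $R_\xi$ as written depends on $k_0,k_0'$ and is only uniformly finite because $x^*$ is bounded), and mere continuity of $h$ at $x^*_k$ would suffice for this theorem, with uniform continuity earning its keep in the paper's Cauchy formulation and in the downstream Lemmas~\ref{lemma:narx} and~\ref{lemma:stationary}.
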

\begin{proof}
Equation~\eqref{eq:nmax} follows from \eqref{eq:ol-rc}. To show the point-wise limit \eqref{eq:nmax-limit} exists, fix $k \in \mathbb{Z}$ and $\xi \in \mathbb{R}^N$. For any $k_0 \in \mathbb{Z}$ with $k_0 \leq k$, consider a solution $\phi(k; k_0, \xi)$ to \eqref{eq:ol-rc}. 
Then \eqref{eq:nmax-limit} can be written as
$ U_{f, h}(u, e)\rvert_k= \mathcal{F}(u_{k-1}, u_{k-2}, \ldots, e_{k-1}, e_{k-2}, \ldots) = \lim_{k_0 \rightarrow -\infty} h \circ \phi(k; k_0, \xi).$
We show that $\{h \circ \phi(k; k_0, \xi)\}_{k_0 \leq k}$ is a Cauchy sequence and thus \eqref{eq:nmax-limit} exists. Since $h$ is uniformly continuous, it suffices to show $\{ \phi(k; k_0, \xi)\}_{k_0 \leq k}$ is Cauchy.

Let $k_0 \geq k'_0$ and $x^*$ be the reference state solution to  \eqref{eq:ol-rc}. By \eqref{eq:defn-cv} in Definition~\ref{defn:cv}, we have
\begin{equation*}
\begin{split}
& \| \phi(k; k_0, \xi) - \phi(k; k'_0, \xi) \| \\
& \leq \|  x^*_k - \phi(k; k_0, \xi)  \| +  \|  x^*_k - \phi(k; k'_0, \xi)  \| \\
& \leq \beta(\| x^*_{k_0} - \xi \|, k - k_0) + \beta(\| x^*_{k'_0} -\xi \|, k - k'_0) \\
& \leq \beta(R_\xi, k - k_0) + \beta(R_\xi, k - k'_0),
\end{split}
\end{equation*}
where $R_\xi = \max\{ \| x^*_{k_0} - \xi \|, \| x^*_{k'_0} - \xi \| \} < \infty$. Since $\beta \in \mathcal{KL}$, for any $\epsilon > 0$ there exists $k^*_0 \in \mathbb{Z}$ such that for all $k'_0  \leq k_0 \leq k^*_0$, $\beta(R_\xi, k - k_0) + \beta(R_\xi, k - k'_0) < \epsilon$. It follows that $\{ \phi(k; k_0, \xi)\}_{k_0 \leq k}$ is Cauchy for any fixed $k$ and $\xi$.

To show that \eqref{eq:nmax-limit} is independent of $\xi$, let $\phi(k; k_0, \xi')$ be another solution to \eqref{eq:ol-rc} starting at another initial condition $\xi' \neq \xi$. Mimicking the argument above gives
\begin{equation*}
\begin{split}
& \| \phi(k; k_0, \xi) -  \phi(k; k_0, \xi') \| \\
& \leq \beta(\| x^*_{k_0} - \xi \|, k - k_0) + \beta(\| x^*_{k_0} - \xi' \|, k - k_0).
\end{split}
\end{equation*}
The limit \eqref{eq:nmax-limit} is independent of $\xi$ now follows from taking $k_0 \rightarrow - \infty$ and uniform continuity of $h$.
\end{proof}

In Theorem~\ref{theorem:limit}, we have written $x_{k+1} = f(x_k, u_k, e_k)$ as in \eqref{eq:ol-rc}. Equivalently, we can write $x_{k+1} = g(x_k, u_k, y_k)$ as in \eqref{eq:cl-rc}, where $g$ and $f$ are related via \eqref{eq:f}. This leads to the following.
\begin{corollary} \label{coro:limit}
Consider a convergent RC described equivalently by \eqref{eq:cl-rc} or \eqref{eq:ol-rc}.
If $h$ is uniformly continuous, then for any $u \in (\mathbb{R}^n)^\mathbb{Z}$, $e \in \mathbb{R}^\mathbb{Z}$ and any $k \in \mathbb{Z}$,
\begin{equation*}
y_k  = \mathcal{G}(u_{k-1}, u_{k-2}, \ldots, y_{k-1}, y_{k-2}, \ldots) + e_k,
\end{equation*}
where the following point-wise limit 
\begin{equation} \label{eq:narx-limit}
\begin{split}
\hspace*{-0.8em} & \mathcal{G}(u_{k-1}, u_{k-2}, \ldots, y_{k-1}, y_{k-2}, \ldots)   \\
\hspace*{-0.8em}  & \coloneqq \lim_{k_0 \rightarrow -\infty} h \circ g(\ldots g(g(\xi, u_{k_0}, y_{k_0}), u_{k_0+1}, y_{k_0+1}) \ldots )
\end{split}
\end{equation}
exists and is independent of initial condition $\xi \in \mathbb{R}^N$.
\end{corollary}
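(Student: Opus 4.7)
My plan is to deduce Corollary~\ref{coro:limit} as a direct translation of Theorem~\ref{theorem:limit} via the identity $f(x, u, e) = g(x, u, h(x) + e)$ in~\eqref{eq:f}, together with the fact that along the reference state solution $y_j = h(x^*_j) + e_j$.

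First I would invoke Theorem~\ref{theorem:limit} to fix $y_k = U_{f, h}(u, e)\rvert_k + e_k$ and the point-wise limit~\eqref{eq:nmax-limit}. Using~\eqref{eq:f}, each application of $f(\cdot, u_j, e_j)$ rewrites as $g(\cdot, u_j, h(\cdot) + e_j)$. A transparent case is $\xi = x^*_{k_0}$: an easy induction on $j$ gives $g(x^*_j, u_j, y_j) = g(x^*_j, u_j, h(x^*_j) + e_j) = f(x^*_j, u_j, e_j) = x^*_{j+1}$, so the $g$-iteration driven by the true $y_j$'s stays on the reference and immediately produces $h(x^*_k) = U_{f, h}(u, e)\rvert_k$, which is the desired limit.

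To handle arbitrary $\xi$, I would mimic the Cauchy-sequence argument used in the proof of Theorem~\ref{theorem:limit}. Let $\tilde{\phi}(k; k_0, \xi)$ denote the $g$-iteration driven by the true $y_j$'s. Since the solution of the equivalent $f$-system starting at $\xi$ at time $k_0$ coincides with $\tilde{\phi}(k; k_0, \xi)$ whenever the intermediate states match the reference, the $\mathcal{KL}$-bound~\eqref{eq:defn-cv} controls $\|\tilde{\phi}(k; k_0, \xi) - x^*_k\|$ and sends it to $0$ as $k_0 \to -\infty$. Uniform continuity of $h$ then delivers $h \circ \tilde{\phi}(k; k_0, \xi) \to h(x^*_k)$ pointwise, yielding both existence of the limit in~\eqref{eq:narx-limit} and its independence of $\xi$.

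The main obstacle is the gap between the \emph{true} (externally supplied) $y_j = h(x^*_j) + e_j$ appearing in~\eqref{eq:narx-limit} and the ``internal'' $y_j = h(\tilde{x}_j) + e_j$ that one obtains by substituting~\eqref{eq:f} into the $f$-iteration. These agree exactly along the reference trajectory, which is what makes $\xi = x^*_{k_0}$ the clean anchor; for general $\xi$ the two disagree at every finite step, and transferring the convergence property from the $f$-system to the $g$-iteration driven by the external $y_j$'s is the delicate step. I expect this reconciliation, rather than any new estimate, to be where most of the care in the proof sits.
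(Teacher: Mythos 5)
Your overall strategy---deducing the corollary from Theorem~\ref{theorem:limit} through the identity \eqref{eq:f} and the relation $y_j = h(x^*_j)+e_j$ along the reference trajectory---is the route the paper takes (the paper in fact offers no separate proof, presenting the corollary as an immediate relabeling of the theorem). You have also correctly isolated the one point that needs care: whether the nested $g$-expression in \eqref{eq:narx-limit} is driven by the externally recorded $y_j$'s or by the ``internal'' values $h(\cdot)+e_j$ evaluated at the current iterate. The problem is that your proposal names this as ``the delicate step'' without carrying it out, and the one sentence offered in its place is a non sequitur: you assert that the $\mathcal{KL}$-bound \eqref{eq:defn-cv} controls $\|\tilde{\phi}(k;k_0,\xi)-x^*_k\|$ because the $f$-solution ``coincides with $\tilde{\phi}$ whenever the intermediate states match the reference.'' For $\xi \neq x^*_{k_0}$ the intermediate states never match the reference, so $\tilde{\phi}(\cdot\,;k_0,\xi)$ is not a solution of \eqref{eq:ol-rc} and \eqref{eq:defn-cv} says nothing about it. Indeed, convergence of the $f$-system amounts to a contraction condition on $x\mapsto g(x,u,h(x)+e)$, whereas attractivity of the externally driven $g$-iteration would require a contraction condition on $x\mapsto g(x,u,y)$ with $y$ frozen; for the ESN these read $\sigma_{\rm m}(A+CW^\top)<1$ versus $\sigma_{\rm m}(A)<1$, and neither implies the other. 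So the gap cannot be closed by ``mimicking the Cauchy argument'' of Theorem~\ref{theorem:limit}.

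The resolution consistent with the paper is to read \eqref{eq:narx-limit} as the term-by-term relabeling of \eqref{eq:nmax-limit}: in the nested expression, each inner application $f(\,\cdot\,,u_j,e_j)$ equals $g(\,\cdot\,,u_j,h(\,\cdot\,)+e_j)$ by \eqref{eq:f}, i.e., the third slot of $g$ at stage $j$ is filled with $h$ of the current iterate plus $e_j$, not with the recorded output $h(x^*_j)+e_j$. Under that reading the two nested sequences coincide symbol for symbol, so existence of the limit and its independence of $\xi$ follow from Theorem~\ref{theorem:limit} with nothing further to prove, and since along the actual trajectory the internal and recorded outputs agree, the limit can then be displayed as a functional $\mathcal{G}$ of $(u_{k-1},\ldots,y_{k-1},\ldots)$. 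Your instinct that something nontrivial is being identified here is sound, but a proof must commit to one interpretation and close the loop: under the internal reading the corollary is immediate and your Cauchy argument is unnecessary, while under the external reading the statement needs an additional hypothesis that neither you nor the paper supplies.
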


\subsection{NARX($\infty$) models} 
\label{subsec:narx}
We apply Corollary~\ref{coro:limit} to show that convergent RCs described by \eqref{eq:cl-rc} or \eqref{eq:ol-rc}, such that $f$ is continuous and $h$ is uniformly continuous, implement NARX($\infty$) models defined in \eqref{eq:y-narx} below. By Theorem~\ref{theorem:limit}, such NARX($\infty$) models can also be written as NMAX($\infty$) models defined in \eqref{eq:y-nmax} below.

Let $(\Omega, \Sigma, \mathbb{P})$ be a complete probability space on which all random variables are defined. We say that $\textbf{z}$ is a stochastic process if $\textbf{z}: (\Omega, \Sigma) \rightarrow ((\mathbb{R}^m)^\mathbb{Z}, (\mathcal{R}^m)^\mathbb{Z})$ is measurable, where $(\mathcal{R}^m)^\mathbb{Z}$ is the $\sigma$-algebra generated by cylindrical sets in $(\mathbb{R}^m)^\mathbb{Z}$; see e.g. \cite[Sec.~36]{billingsley2008probability}. 
Further, for any $\omega \in \Omega$, $\textbf{z}(\omega) = \{ \textbf{z}_k(\omega)\}_{k \in \mathbb{Z}}$ is a realization of $\textbf{z}$.

Consider RCs described by \eqref{eq:cl-rc} or \eqref{eq:ol-rc} under $\mathbb{R}^n$-valued and $\mathbb{R}$-valued stochastic processes $\textbf{u}$ and $\textbf{e}$.
The state $\textbf{x}_k$ and the one-step ahead prediction $\hat{\textbf{y}}_k$ are described by a stochastic RC:
\begin{equation*} \label{eq:cl-rc-sto}
\begin{cases}
\textbf{x}_{k+1} = g(\textbf{x}_k, \textbf{u}_k, \textbf{y}_k), \\
\hspace*{1.1em} \hat{\textbf{y}}_k = h(\textbf{x}_k),
\end{cases}
\end{equation*}
where $\textbf{y}_k = \hat{\textbf{y}}_k + \textbf{e}_k$.
Stochasticity of $\mathbf{y}$ arises solely from the stochasticity of $\textbf{u}$ and $\textbf{e}$, and the maps $g, h$ are deterministic. 

For a convergent RC described by \eqref{eq:cl-rc} or \eqref{eq:ol-rc} with uniformly continuous $h$, by Corollary~\ref{coro:limit}, $\textbf{y}_k$ is described by a NARX($\infty$) model such that for all $k \in \mathbb{Z}$,
\begin{equation} \label{eq:y-narx}
\textbf{y}_k = \mathcal{G}(\textbf{u}_{k-1}, \textbf{u}_{k-2}, \ldots, \textbf{y}_{k-1}, \textbf{y}_{k-2}, \ldots) + \textbf{e}_k.
\end{equation}
For each $\omega \in \Omega$, the point-wise limit \eqref{eq:narx-limit} exists and is independent of initial condition $\xi \in \mathbb{R}^N$.

Equivalently, for a convergent RC described by \eqref{eq:cl-rc} or \eqref{eq:ol-rc} with uniformly continuous $h$, by Theorem~\ref{theorem:limit}, $\textbf{y}_k$ is also described by a NMAX($\infty$) model, such that for all $k \in \mathbb{Z}$
\begin{equation} \label{eq:y-nmax}
\begin{split}
\textbf{y}_k & = U_{f,h}(\textbf{u}, \textbf{e})\rvert_k + \textbf{e}_k \\
& =  \mathcal{F}(\textbf{u}_{k-1}, \textbf{u}_{k-2}, \ldots, \textbf{e}_{k-1}, \textbf{e}_{k-2}, \ldots) + \textbf{e}_k.
\end{split}
\end{equation}
For each $\omega \in \Omega$, the point-wise limit \eqref{eq:nmax-limit} 
exists and is independent of initial condition $\xi \in \mathbb{R}^N$.


We now show that if in addition, $f$ defined by \eqref{eq:f} is continuous, then the output $\textbf{y}$ of a NARX($\infty$) model given by \eqref{eq:y-narx} is a well-defined stochastic process. 

\begin{lemma} \label{lemma:narx}
Consider a convergent RC described equivalently by \eqref{eq:cl-rc} or \eqref{eq:ol-rc}. Let $U_{f, h}:((\mathbb{R}^n)^{\mathbb{Z}} \times \mathbb{R}^{\mathbb{Z}}, (\mathcal{R}^n)^{\mathbb{Z}} \times \mathcal{R}^{\mathbb{Z}}) \rightarrow (\mathbb{R}^\mathbb{Z}, \mathcal{R}^\mathbb{Z})$ be the unique filter induced by \eqref{eq:ol-rc}. Suppose that $h: \mathbb{R}^N \rightarrow \mathbb{R}$ is uniformly continuous and $f:\mathbb{R}^N \times \mathbb{R}^n \times \mathbb{R} \rightarrow \mathbb{R}^N$ defined by \eqref{eq:f} is continuous. Then $U_{f, h}$ is measurable and for any stochastic processes $\textbf{u}$ and $\textbf{e}$, the output $\textbf{y}$ of the NARX($\infty$) model defined by \eqref{eq:y-narx} is a stochastic process. 
\end{lemma}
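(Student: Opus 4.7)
The plan is to reduce everything to measurability of the functional $F_{f,h}$ from Section~\ref{subsec:filters-functionals}; once this is established, measurability of the filter $U_{f,h}$ and of the output process $\textbf{y}$ will follow from standard facts about cylindrical $\sigma$-algebras and compositions of measurable maps.

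First I would prove $F_{f,h}$ is measurable. Fix any $\xi\in\mathbb{R}^N$, and for each $k_0\in\mathbb{Z}_-$ define the finite-horizon truncation
\[
F^{k_0}(u',e') \coloneqq h\circ f(\ldots f(f(\xi,u'_{k_0},e'_{k_0}),u'_{k_0+1},e'_{k_0+1})\ldots),
\]
where $|k_0|$ applications of $f$ are taken so that the last is at time $-1$. This map depends only on the finite collection of coordinates $(u'_{k_0},\ldots,u'_{-1})$ and $(e'_{k_0},\ldots,e'_{-1})$; since $f$ is continuous and $h$ is uniformly continuous (hence continuous), $F^{k_0}$ factors as a continuous function on the corresponding finite-dimensional Euclidean space composed with the canonical projection from the sequence space. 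Those projections are measurable with respect to the cylindrical $\sigma$-algebra $(\mathcal{R}^n)^\mathbb{Z}\times\mathcal{R}^\mathbb{Z}$, so each $F^{k_0}$ is measurable. By Theorem~\ref{theorem:limit}, $F^{k_0}(u',e')\to F_{f,h}(u',e')$ pointwise as $k_0\to-\infty$ for any left-infinite extension; since pointwise limits of real-valued measurable functions are measurable, $F_{f,h}$ is measurable.

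Next, measurability of $U_{f,h}$ reduces to coordinate-wise measurability: because $\mathcal{R}^\mathbb{Z}$ is generated by cylindrical sets, a map into $(\mathbb{R}^\mathbb{Z},\mathcal{R}^\mathbb{Z})$ is measurable iff each coordinate evaluation is. Using the identity $U_{f,h}(u,e)\rvert_k = F_{f,h}(P^{\mathbb{Z}_-}_n\circ z^{-k}_n(u),P^{\mathbb{Z}_-}\circ z^{-k}(e))$ recalled in Section~\ref{subsec:filters-functionals}, each coordinate is a composition of $F_{f,h}$ with the shift and projection operators, both of which are trivially measurable with respect to the cylindrical $\sigma$-algebras. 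Finally, for the process claim, the pair $(\textbf{u},\textbf{e})$ is measurable into the product space, so $U_{f,h}(\textbf{u},\textbf{e})$ is a stochastic process; coordinatewise addition with $\textbf{e}$ preserves measurability, so $\textbf{y}$ defined by \eqref{eq:y-narx} is a stochastic process.

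The main obstacle is the first step, where one must be careful to verify that the finite-horizon truncations $F^{k_0}$ are measurable with respect to the cylindrical $\sigma$-algebra on the infinite-dimensional product space rather than some a priori larger Borel $\sigma$-algebra. The resolution is the bookkeeping observation that $F^{k_0}$ depends on only finitely many coordinates and is continuous in those, so it factors through a coordinate projection into a Euclidean space where continuity gives Borel measurability. Once this is in place, Theorem~\ref{theorem:limit} supplies the pointwise convergence, and the standard closure of measurable functions under pointwise limits finishes the argument.
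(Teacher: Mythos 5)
Your proposal is correct and follows essentially the same route as the paper: finite-horizon truncations $F^{k_0}$ that factor through finite-dimensional projections composed with a continuous map (continuity inherited from $f$ and $h$), then pointwise convergence via Theorem~\ref{theorem:limit} and closure of measurable functions under pointwise limits. The only cosmetic difference is that you justify the transfer of measurability from $F_{f,h}$ to $U_{f,h}$ directly via coordinate-wise evaluation and the shift/projection operators, where the paper simply cites the bijection result from the literature.
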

\begin{proof}
Since $\textbf{y}$ is also the output of the corresponding NMAX($\infty$) model \eqref{eq:y-nmax}, it follows that $\textbf{y}$ is a stochastic process if $U_{f, h}$ is measurable. Recall the bijection between $U_{f, h}$ and its functional $F_{f, h}$. This bijection implies that $U_{f, h}$ is measurable if and only if $F_{f, h}$ is measurable \cite[Sec.~II]{gonon2019reservoir}. Hence it suffices to show $F_{f, h} :((\mathbb{R}^n)^{\mathbb{Z}_{-}} \times \mathbb{R}^{\mathbb{Z}_{-}}, (\mathcal{R}^n)^{\mathbb{Z}_{-}} \times \mathcal{R}^{\mathbb{Z}_{-}}) \rightarrow (\mathbb{R}, \mathcal{B}(\mathbb{R})) $ is measurable. By Theorem~\ref{theorem:limit}, for any $u' \in (\mathbb{R}^n)^{\mathbb{Z}_{-}}$ and $e' \in \mathbb{R}^{\mathbb{Z}_{-}}$, $F_{f, h}(u', e') = \lim_{k_0 \rightarrow -\infty}  h \circ f(\ldots f(\xi, u'_{k_0}, e'_{k_0}) \ldots),$ where the limit exists and is independent of $\xi$. Fix a $\xi \in \mathbb{R}^N$, for any $k_0 \in \mathbb{Z}_{-}$, define $F_{f, h}^{k_0}: (\mathbb{R}^n)^{\mathbb{Z}_-} \times \mathbb{R}^{\mathbb{Z}_-} \rightarrow \mathbb{R}$ by 
\begin{equation*}
\begin{split}
& F_{f, h}^{k_0}(u', e') \coloneqq  h \circ f(\ldots f(\xi, u'_{k_0}, e'_{k_0}) \ldots).
\end{split}
\end{equation*}
Then $\lim_{k_o \rightarrow -\infty} F^{k_0}_{f,h}(u', e') = F_{f,h}(u', e')$ point-wise and $F_{f, h}$ is measurable if $F_{f, h}^{k_0}$ is measurable for all $k_0 =\{\ldots, -2, -1\}$ \cite[Theorem~13.4]{billingsley2008probability}. To show this, we write
$$F^{k_0}_{f, h}(u', e') = h\circ \overline{f}_{k_0} (\mathcal{P}_{n}^{k_0}(u'),  \mathcal{P}^{k_0}(e')),$$ where $\mathcal{P}_n^{k_0}\coloneqq \prod_{j=k_0}^{-1} P_n^j: (\mathbb{R}^n)^{\mathbb{Z}_{-}} \rightarrow (\mathbb{R}^n)^{-k_0}$, $\mathcal{P}^{k_0}\coloneqq \prod_{j=k_0}^{-1} P^j: \mathbb{R}^{\mathbb{Z}_{-}} \rightarrow \mathbb{R}^{-k_0}$. Here, $P^j_n(u') = u'_j$ and $ P^j(e')=e'_j$. Furthermore, $\overline{f}_{k_0}: (\mathbb{R}^n)^{-k_0} \times \mathbb{R}^{-k_0} \rightarrow \mathbb{R}^N$ is given by 
\begin{equation*}
\overline{f}_{k_0} (\mathcal{P}_{n}^{k_0}(u'),  \mathcal{P}^{k_0}(e')) =  f(\ldots f(\xi, u'_{k_0}, e'_{k_0}) \ldots).
\end{equation*}
Since $h, \mathcal{P}_n^{k_0}$ and $\mathcal{P}^{k_0}$ are measurable, it remains to show that $\overline{f}_{k_0}$ is measurable. To this end, to simplify notation, for any $i, j \in \mathbb{Z}_{+}$, let $u'_{k_0-j:-1-i}$ and $e'_{k_0-j:-1-i}$ denote the concatenation of $\{u'_{k_0-j}, \ldots, u'_{-1-i}\}$ and $\{e'_{k_0-j}, \ldots, e'_{-1-i}\}$ into a column vector, respectively. We can define $\overline{f}_{k_0}$ recursively via
$\overline{f}_{k_0-1}(u'_{k_0-1:-1}, e'_{k_0-1:-1}) = f(\overline{f}_{k_0}(u'_{k_0-1:-2}, e'_{k_0-1:-2}), u'_{-1}, e'_{-1})$. Using this recursion and continuity of $f$, an inductive argument on $k_0$ shows that $\overline{f}_{k_0}$ is continuous.
Hence $F_{f, h}$ and $U_{f, h}$ are measurable, and $\textbf{y}$ defined by \eqref{eq:y-narx} is a stochastic process.
\end{proof}

\subsection{Stationarity and ergodicity}
\label{subsec:stationary}
In this section, we derive conditions under which $\textbf{y}$ defined by \eqref{eq:y-narx} is stationary and/or ergodic given that $\textbf{u}, \textbf{e}$ are stationary \cite[Definition~2.2]{fan2008nonlinear} and/or ergodic \cite[Sec.~24\&36]{billingsley2008probability}. We equip $\mathbb{R}^m$ with the Borel $\sigma$-algebra $\mathcal{B}(\mathbb{R}^m)$. An $\mathbb{R}$-valued process $\tilde{\textbf{z}}$ is Birkhoff-Khinchin ergodic if $\lim_{L \rightarrow \infty} \frac{1}{L} \sum_{k=0}^{L-1} \tilde{\textbf{z}}_k = \mathbb{E}_{\mathbb{P}}[\tilde{\textbf{z}}_0]$ almost surely (a.s.), where $\mathbb{E}_{\mathbb{P}}[\cdot]$ is the expectation over $\mathbb{P}$. 





To establish statistical properties of the output of a NARX($\infty$) model defined by \eqref{eq:y-narx}, we again exploit the bijection with its associated NMAX($\infty$) model given by \eqref{eq:y-nmax}.
\begin{lemma} \label{lemma:stationary}
Consider a convergent RC described equivalently by \eqref{eq:cl-rc} or \eqref{eq:ol-rc}. Under the assumptions of Lemma~\ref{lemma:narx}, the process $\textbf{y}$ defined by \eqref{eq:y-narx} is stationary (resp. ergodic) if $\textbf{u}$ and $\textbf{e}$ are stationary (resp. ergodic). Furthermore, suppose that $h \circ \textbf{x}^*_0$ and $\textbf{e}_0$ are integrable, where $\textbf{x}^*$ is the reference state solution to \eqref{eq:ol-rc}. Then under the assumptions of Lemma~\ref{lemma:narx}, $\textbf{y}$ is Birkhoff-Khinchin ergodic if $\textbf{u}$, $\textbf{e}$ are stationary and ergodic.
\end{lemma}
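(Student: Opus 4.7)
The plan is to work through the NMAX($\infty$) representation \eqref{eq:y-nmax} provided by Theorem~\ref{theorem:limit}. Since $\textbf{y}_k = U_{f,h}(\textbf{u},\textbf{e})\rvert_k + \textbf{e}_k$, and since Lemma~\ref{lemma:narx} guarantees that $U_{f,h}$ is a measurable, time-invariant, causal filter, all three claims reduce to the observation that the map $\Phi:((\mathbb{R}^n)^{\mathbb{Z}}\times\mathbb{R}^{\mathbb{Z}})\to\mathbb{R}^{\mathbb{Z}}$ defined by $\Phi(u,e)\rvert_k := U_{f,h}(u,e)\rvert_k + e_k$ is measurable and shift-equivariant, i.e.\ commutes with the coordinate shifts on both spaces. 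Measurability of $\Phi$ is immediate from Lemma~\ref{lemma:narx} together with measurability of coordinate projections, and shift-equivariance follows from the time-invariance of $U_{f,h}$ recalled in Sec.~\ref{subsec:filters-functionals} (explicitly, $U_{f,h}(z_{n}^{-\tau}u,z^{-\tau}e)\rvert_k = U_{f,h}(u,e)\rvert_{k-\tau}$).

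For stationarity, I would invoke joint stationarity of $(\textbf{u},\textbf{e})$ and use that the pushforward of a shift-invariant probability measure by a shift-equivariant measurable map is again shift-invariant; since $\textbf{y} = \Phi(\textbf{u},\textbf{e})$, the law of $\textbf{y}$ is shift-invariant and hence $\textbf{y}$ is stationary. For ergodicity, take any shift-invariant event $A$ in the cylindrical $\sigma$-algebra of $\mathbb{R}^{\mathbb{Z}}$; shift-equivariance of $\Phi$ gives that $\Phi^{-1}(A)$ is shift-invariant in $(\mathbb{R}^n)^{\mathbb{Z}}\times\mathbb{R}^{\mathbb{Z}}$, and joint ergodicity of $(\textbf{u},\textbf{e})$ forces $\mathbb{P}(\Phi^{-1}(A))\in\{0,1\}$, so $\mathbb{P}_{\textbf{y}}(A)\in\{0,1\}$. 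This is the standard transfer-of-ergodicity argument for factor maps of measure-preserving systems.

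For the Birkhoff-Khinchin claim, I would combine the previous two parts with the integrability hypothesis. Since $\textbf{y}_0 = h(\textbf{x}^*_0) + \textbf{e}_0$ by evaluating the reference state solution at $k=0$, the triangle inequality gives
\begin{equation*}
\mathbb{E}_{\mathbb{P}}|\textbf{y}_0| \leq \mathbb{E}_{\mathbb{P}}|h\circ\textbf{x}^*_0| + \mathbb{E}_{\mathbb{P}}|\textbf{e}_0| < \infty.
\end{equation*}
Stationarity and ergodicity of $\textbf{y}$ from the first two parts, together with this $L^1$ bound, let me apply the Birkhoff-Khinchin theorem \cite[Theorem~24.1]{billingsley2008probability} to conclude $L^{-1}\sum_{k=0}^{L-1}\textbf{y}_k \to \mathbb{E}_{\mathbb{P}}[\textbf{y}_0]$ almost surely. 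The main obstacle is largely conceptual rather than technical: the statement refers to stationarity/ergodicity of $\textbf{u}$ and $\textbf{e}$ individually, but the factor-map argument genuinely needs the \emph{joint} process $(\textbf{u},\textbf{e})$ to be stationary (resp.\ ergodic). This is automatic when $\textbf{u}$ and $\textbf{e}$ are taken independent, which is the standard setup; the hypothesis should be read in that sense, and this point is worth flagging in the write-up.
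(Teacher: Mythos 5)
Your proof is correct and follows essentially the same route as the paper's: the paper likewise passes to the NMAX($\infty$) form \eqref{eq:y-nmax}, obtains stationarity and ergodicity by citing \cite[Theorem~36.4]{billingsley2008probability} (the factor-map argument you spell out by hand), and establishes integrability by bounding $|F_{f,h}(P_n^{\mathbb{Z}_-}(\textbf{u}),P^{\mathbb{Z}_-}(\textbf{e}))+\textbf{e}_0|$ by $|h(\textbf{x}^*_0)|+|\textbf{e}_0|$, which is your triangle inequality in functional form. One small correction to your closing remark: independence of $\textbf{u}$ and $\textbf{e}$ together with individual ergodicity does \emph{not} in general yield joint ergodicity of $(\textbf{u},\textbf{e})$ (the product of two ergodic systems can fail to be ergodic); what rescues the argument in the paper's setting is that $\textbf{e}$ is iid under Assumption~\ref{assumption1}, hence mixing, and the product of an ergodic system with a mixing one is ergodic --- so your instinct to flag the joint-versus-marginal hypothesis is well placed, and the paper glosses over the same point.
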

\begin{proof}
By \eqref{eq:y-nmax}, we have $\textbf{y}_k = U_{f, h}(\textbf{u}, \textbf{e})\rvert_k + \textbf{e}_k$, where $U_{f, h}$ is the unique filter induced by \eqref{eq:ol-rc}. By Lemma~\ref{lemma:narx}, $U_{f, h}$ is measurable. Now $\textbf{y}$ is stationary (resp. ergodic), given that $\textbf{u}, \textbf{e}$ are stationary (resp. ergodic), follows from \cite[Theorem~36.4]{billingsley2008probability}. To show the second part of the Lemma, by discussions on \cite[p.~526]{billingsley2008probability} and the Birkhoff-Khinchin ergodic theorem \cite[Theorem~24.1]{billingsley2008probability}, it suffices to show that $F_{f, h}(P_n^{\mathbb{Z}_{-}}(\textbf{u}), P^{\mathbb{Z}_{-}}(\textbf{e})) + \textbf{e}_0 : (\Omega, \Sigma, \mathbb{P}) \rightarrow \mathbb{R}$ is integrable, where $F_{f, h}$ is the  functional induced by \eqref{eq:ol-rc}. Recall that
\begin{equation*}
\begin{split}
& \int_{\Omega} \left| F_{f, h}(P_n^{\mathbb{Z}_{-}}(\textbf{u}(\omega)), P^{\mathbb{Z}_{-}}(\textbf{e}(\omega))) + \textbf{e}_0(\omega) \right| \mathbb{P}({\rm d}\omega) \\
& \leq \int_{\Omega} \left(|h(\textbf{x}^*_0(\omega))| + |\textbf{e}_0(\omega)| \right) \mathbb{P}({\rm d}\omega).
\end{split}
\end{equation*}
Integrability of $F_{f, h}(P_n^{\mathbb{Z}_{-}}(\textbf{u}), P^{\mathbb{Z}_{-}}(\textbf{e})) + \textbf{e}_0$ now follows from integrability of $h \circ \textbf{x}^*_0$ and $\textbf{e}_0$.
\end{proof}


\begin{remark} \label{remark:sufficient-integrable}
When $h$ is uniformly continuous (as required in Theorem~\ref{theorem:limit}), a sufficient condition for integrability of $h\circ \textbf{x}^*_0$ is that there exists $0< M < \infty$ such that $\| \textbf{x}^*_{0}\| \leq M$ a.s..
\end{remark}

\begin{remark} \label{remark:one-step-ahead}
A similar argument as in the proof of Lemma~\ref{lemma:stationary} shows that under the assumptions of Lemma~\ref{lemma:stationary}, the one-step ahead prediction $\hat{\textbf{y}}$ and $\hat{\textbf{y}}^2$ are Birkhoff-Khinchin ergodic if $\textbf{u}$, $\textbf{e}$ are stationary and ergodic.
\end{remark}

We conclude this section by imposing the following standard assumptions on $\textbf{u}$ and $\textbf{e}$:
\begin{assumption}\label{assumption1}
$\textbf{u}$ and $\textbf{e}$ are independent, and $\textbf{e}$ is identically and independently distributed (iid). This implies that $\textbf{e}$ is stationary and ergodic \cite[Sec.~36]{billingsley2008probability}.
\end{assumption}

\begin{remark} \label{remark:residual}
Assumption~\ref{assumption1} lays the basis for analyzing the RC residual $\hat{\textbf{e}}_k = \textbf{y}_k - \hat{\textbf{y}}_k$; see \cite{fan2008nonlinear} for further detail. If the RC prediction $\hat{\textbf{y}}$ describes the target data $\textbf{y}$ adequately, $\hat{\textbf{e}}$ should be a proxy for $\textbf{e}$. To test this, we test if $\hat{\textbf{e}}$ are uncorrelated using sample autocorrelation (ACF) via the ``ggAcf'' \texttt{R} command \cite{forecast}, and test if $\textbf{e}$ is Gaussian using the Lilliefors test \cite{lilliefors1967kolmogorov} and Q-Q plot (via the ``lillietest'' and the ``qqplot'' \texttt{Matlab} commands). In system identification, we further test if $\hat{\textbf{e}}$ is independent of $\textbf{u}$ based on their sample cross-correlation via the ``ggCcf'' \texttt{R} command \cite{forecast}.
\end{remark}

\section{Parameter estimation}
\label{sec:para_est}

This section presents parameter estimation of ESNs and QRCs as NARX($\infty$) models. These RCs are described by \eqref{eq:cl-rc} or \eqref{eq:ol-rc}, where $f$ and $h$ satisfy the assumptions in Lemma~\ref{lemma:stationary}. Further, Remark~\ref{remark:sufficient-integrable} holds since ESNs and QRCs admit a compact state-space. By Lemma~\ref{lemma:stationary} and Remark~\ref{remark:one-step-ahead}, if these RCs are convergent, then they realize NARX($\infty$) models whose output $\textbf{y}$ and one-step ahead prediction $\hat{\textbf{y}}$ are stationary and ergodic. Here, $\hat{\textbf{y}}_k = W^\top \overline{h}(\textbf{x}_k) + W_c$, where $W \in \mathbb{R}^N, W_c \in \mathbb{R}$ are output parameters and $\overline{h}$ is not parametrized.


Given time series data $\textbf{y}_k, \textbf{u}_k$ for $k=0, \ldots, L$, the first $L_1$ data are for washing out the effect of the RC's initial condition, the next $L_t = L_2 - L_1$ data are for training and the remaining data are for validation. We apply Theorem~\ref{theorem:cv} to \eqref{eq:ol-rc} to ensure the RC's convergence, the resulting optimization of $W, W_c$ becomes a convex constrained least squares problem,
 $\min_{W, W_c} \frac{1}{L_t}\sum_{k=L_1+1}^{L_2} |\textbf{y}_k - \hat{\textbf{y}}_{k}|^2$, subject to $\ G(W) \leq 0.$
Note that if $\textbf{y}^2$ and $\textbf{y} \hat{\textbf{y}}$ are  Birkhoff-Khinchin ergodic, then parameter estimation is consistent, i.e., as $L_t \rightarrow \infty$, the above cost function becomes 
$
\min_{W, W_c} \mathbb{E}_{\mathbb{P}}[|\textbf{y}_0 - \hat{\textbf{y}}_0|^2]$ a.s..

\subsection{Echo-state networks (ESNs)}
\label{subsec:ESN}
An ESN with state $x_k \in \mathbb{R}^N$ is governed by
\begin{equation} \label{eq:esn}
\begin{cases}
x_{k+1} = \tanh(Ax_k + B u_k + C y_k), \\
\hspace*{1.1em} \hat{y}_{k} = W^\top x_k + W_c
\end{cases}
\end{equation}
where $\tanh(\cdot)$ applies to a vector elementwise. Elements of $A, B, C$ are drawn independently and uniformly from $[-1, 1]$ (for time series modeling, we set $C = 0$). To apply Theorem~\ref{theorem:cv}, we re-express \eqref{eq:esn} in the form of \eqref{eq:ol-rc} by substituting $y_k = \hat{y}_k + e_k = W^\top x_k + W_c + e_k$ into \eqref{eq:esn}:
\begin{equation*} \label{eq:esn-ol}
\begin{split}
x_{k+1} & = f_{\rm ESN}(x_k, u_k, e_k) \\
& \coloneqq \tanh((A + C W^\top)x_k  + C (W_c +e_k)  + B u_k).
\end{split}
\end{equation*}
Since ESNs admit a compact state-space, by Theorem~\ref{theorem:cv}, an ESN is convergent if for any $x_1, x_2 \in \mathbb{R}^N$ and any $k\in \mathbb{Z}$,
\begin{equation*}
\begin{split}
& \| f_{\rm ESN}(x_1, u_k, e_k) - f_{\rm ESN}(x_2, u_k, e_k) \|_P  \\
& \leq \sigma_{\rm m}(A+CW^\top) \| x_1 - x_2 \|_P < \| x_1 - x_2 \|_P ,
\end{split}
\end{equation*}
where $\sigma_{\rm m}(\cdot)$ is the maximum singular value. That is, an ESN is convergent if $\sigma_{\rm m}(A+CW^\top)<1$. We optimize $W,W_c$ for the ESN using YALMIP \cite{Lofberg2004} by passing the non-strict inequality constraint:
\begin{equation} \label{eq:esn-bound}
\begin{bmatrix}
I & A+CW^\top \\ (A+CW^\top)^\top & I 
\end{bmatrix} \geq 10^{-3} I.
\end{equation}

\subsection{Quantum reservoir computers (QRCs)}
\label{subsec:QRC}
Consider an $N$-qubit QRC proposed in \cite{chen2020temporal} described by
\begin{equation} \label{eq:QRC}
\begin{cases}
\rho_{k+1} = (1-\epsilon) T(u_k, y_k) \rho_k + \epsilon \rho_*, \\
\hspace*{1em} \hat{y}_k = \sum_{i=1}^{N} W_i {\rm Tr}(Z^{(i)} \rho_k) + W_c,
\end{cases}
\end{equation}
where $\rho_k$ is a $2^N \times 2^N$ density operator (a positive semidefinite Hermitian matrix with trace one), $\rho_*$ is a fixed density operator whose $(1,1)$-th element is one and all other elements are zero, $Z^{(i)}$ is the Pauli-$Z$ operator on qubit $i$ and $\epsilon = 0.9$. The dynamics $T(u_k, y_k)$ is a completely positive trace-preserving (CPTP) map determined by $u_k$ and $y_k = \hat{y}_k + e_k$. A CPTP map sends a density operator to another density operator \cite{chen2020temporal}. 
We consider
\begin{equation*}
\begin{split}
& T(u_k, y_k) = \frac{1}{n+1} \left( \sum_{j=1}^{n} g(u^{(j)}_k) T_j + g(y_k) T_{n+1}  \right. \\
& \hspace*{6em} \left. + \left(n+1 - \sum_{j=1}^{n} g(u_k^{(j)}) - g(y_k) \right) T_{n+2} \right),
\end{split}
\end{equation*}
where $u^{(j)}_k$ is the $j$-th component of $u_k \in \mathbb{R}^n$. We choose $g(s) = \frac{1}{1 + \exp(-s)}$ with a globally Lipschitz constant $L_g=1/4$. Generally, different $g$'s can be applied to $u^{(j)}_k$ and $y_k$ (for time series modeling, we set $g(y_k)=0$). The CPTP maps $T_l$ for $l = 1, \ldots n+2$ are governed by arbitrary but fixed unitary matrices $U_l$ such that $T_l(\rho_k) = U_l \rho_k U^\dagger_l$, where $\dagger$ denotes the adjoint; see \cite{arxiv} for the details. Such QRCs can be implemented on current quantum machines \cite{chen2020temporal}. 

Let $\| A \|_1 \coloneqq {\rm Tr}(\sqrt{A^\dagger A})$ for any matrix $A$. For any CPTP map $T$, let $\| T \|_{1-1} \coloneqq \sup_{\|A\|_1=1} \| T(A) \|_1$. Since QRCs admit a compact state-space, by \eqref{eq:theorem-cv1} in Theorem~\ref{theorem:cv} (Theorem~\ref{theorem:cv} also applies to $\| \cdot \|_1$), a QRC is convergent if for any density operators $\rho_j (j=1,2)$, any $k \in \mathbb{Z}$, $u \in (\mathbb{R}^n)^\mathbb{Z}$ and $e \in \mathbb{R}^\mathbb{Z}$,
\begin{equation*}
\begin{split}
& \| (1-\epsilon) T(u_k, y_{k, 1}) \rho_1 + \epsilon \rho_* - ((1-\epsilon) T(u_k, y_{k, 2}) \rho_2 + \epsilon \rho_*) \|_1 \\
& \leq \left( (1-\epsilon) + \frac{2 L_g (1-\epsilon)}{n+1} \sum_{i=1}^{N} | W_i | \right)  \| \rho_1 - \rho_2 \|_1 \\
& \leq \theta \| \rho_1 - \rho_2 \|_1,
\end{split}
\end{equation*}
where the first inequality follows from $\| T_{n+1} - T_{n+2} \|_{1-1} \leq 2$ \cite{chen2020temporal} and $|g(y_1) - g(y_2)| \leq L_g | y_1 - y_2| \leq L_g \sum_{i=1}^{N} | W_i | \| \rho_1 - \rho_2 \|_1.$ Hence, a QRC described by \eqref{eq:QRC} is convergent if there exists some $\theta \in (0, 1)$ such that
\begin{equation} \label{eq:qrc-bound}
\sum_{i=1}^{N} |W_i| - \frac{\theta + \epsilon - 1}{1-\epsilon} \frac{n+1}{2 L_g} \leq 0.
\end{equation}
Throughout, we set $\theta = 0.99$. We optimize $W$ using the ``fmincon'' command in \texttt{Matlab}.

We remark that \eqref{eq:esn-bound} and \eqref{eq:qrc-bound} are only sufficient. Nevertheless, numerical experiments suggest that ESNs and QRCs under these constraints are adequate in describing the data. It is a future direction to find relaxation of these constraints.

\section{Numerical examples}
\label{sec:ts}
We employ convergent ESNs and QRCs to model time series. 
For each time series, we randomly generate 50 ESNs with $N=2, \ldots, 10$, and 50 QRCs with qubit number $N=2, \ldots, 5$.
The fitted ESNs and QRCs are selected via final prediction error (${\rm FPE}$) criterion using validation data defined as ${\rm FPE} \coloneqq \left( \frac{L_v+N+1}{L_v-N+1} \right) {\rm MSE}$ and ${\rm MSE} \coloneqq \frac{1}{L_v} \sum_{k=L_2+1}^{L} \hat{\textbf{e}}_k^2$, where $L_v = L - L_2$ and $\textbf{e}_k$ is the residual. We remark that ${\rm FPE}$ assumes $W, W_c$ are unbiased \cite{konishi2008information}. Although this cannot be ensured in general, the decrease in ${\rm MSE}$ is small as $N$ increases and ${\rm FPE}$ prefers the lowest order $N=2$. It is a future research theme to develop model selection methods for the proposed scheme. We employ the root mean-squared error (${\rm RMSE}$) to compare among different models, where ${\rm RMSE} \coloneqq \sqrt{{\rm MSE}}$; see \cite{hyndman2006another} for further discussions. We report the selected RC's ${\rm RMSE}$ and the average ${\rm RMSE}$ of 50 randomly generated RCs with the same dimension $N$ (or qubit number) as the selected RC.

We estimate ${\rm Tr}(Z^{(i)} \rho_j)$ by averaging $M_m$ measurements on quantum machines, whose variance decreases as $1/M_m$ \cite{chen2020temporal}. Here, we assume that the sampling error is negligible by taking $M_m$ sufficiently large. To investigate the effect of decoherence when QRCs are implemented on current quantum machines, we simulate the selected QRCs under dephasing and generalized amplitude damping (GAD) channels as in \cite{IBMQnoise}.

For all detailed numerical settings, see \cite{arxiv}.

\subsection{Nonlinear quantum optics}
\label{subsec:kerr}
This example demonstrates that RCs can act as nonlinear Wiener filters to extract the signal  component of a highly noisy time series from nonlinear quantum optics.
We consider a low-photon-number Kerr cavity with two input-output ports (Fig.~\ref{fig:kerr-cavity}(a) or \cite[Fig.~2(a)]{santori2014quantum}), whose internal mode is governed by a Hamiltonian $H_0 = \Delta a^\dagger a + \chi (a^\dagger)^2 a^2$, where $a$ is the annihilation operator, $\Delta=100$ is the detuning from a reference frequency, and $\chi=-5$ governs the nonlinearity. The cavity is coupled to two incoming fields $\alpha_{{\rm in}j}(t)$ via operators $L_j = -\iota \sqrt{\kappa_j} a$, where $\iota=\sqrt{-1}$ and $\kappa_j = 25$ for $j=1,2$. Here, $\alpha_{\rm in1}(t)$ is a coherent field with a constant amplitude $\eta=21.5$ and $\alpha_{\rm in2}(t)$ is in the vacuum state.

\begin{figure}[ht]
\centering
\includegraphics[scale=0.8]{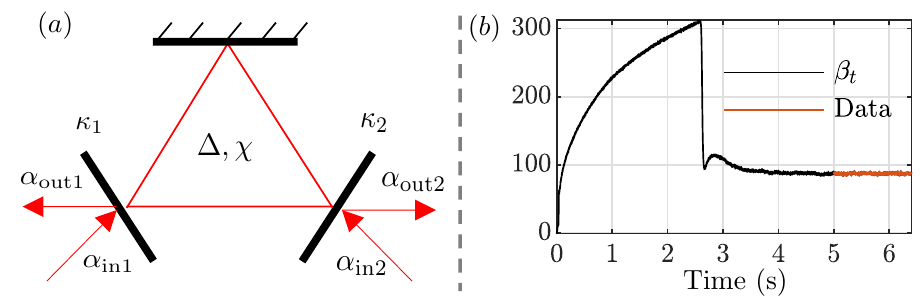}
\caption{(a) Kerr-nonlinear cavity with two input-output ports. The top mirror is fully reflective (without loss) and the other mirrors are partially transmitting. (b) The simulated trajectory of $\beta_k$ and the data employed.}
\label{fig:kerr-cavity}
\end{figure}

\begin{figure}[ht]
\centering
\includegraphics[scale=0.8, trim={2.5em, 0em, 0em, 0em}, clip]{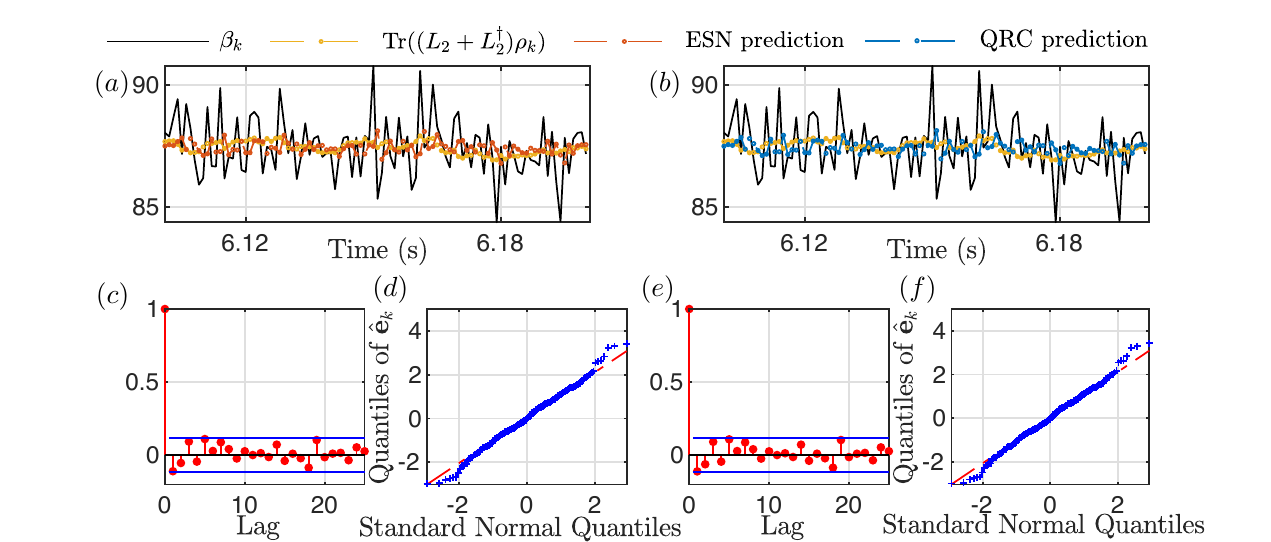}
\caption{The noisy time series $\beta_k$, the signal $\alpha_k = {\rm Tr}((L_2 + L_2^\dagger)\rho_k)$, (a) the ESN prediction and (b) the QRC prediction on the first 100 validation data. (c) The ESN residual sample ACF (horizontal blue lines show the 95\% CI). (d) The ESN residual Q-Q plot. (e) The QRC residual sample ACF. (e) The QRC residual Q-Q plot.}
\label{fig:kerr}
\end{figure}

\begin{table}[!ht]
\centering
\caption{Dimension (or qubit number) $N$, $p$-value, ${\rm RMSE}$ and average ${\rm RMSE}$ ($\overline{{\rm RMSE}}$) of RCs for stochastic modeling.}
\label{table1}
\scalebox{1}{
\begin{tabular}{c|c|c|c|c|c|c}
& \multicolumn{2}{c|}{Optics}&   \multicolumn{2}{c|}{Meteorology} & \multicolumn{2}{c}{Coupled electric drive} \\
\hline\hline 
    &ESN  & QRC &  ESN & QRC & ESN & QRC\\
\hline
$N$ & 2 & 2 & 2 & 2 & 2 & 2 \\
$p$-value &0.5  & 0.5 &  0.5 & 0.23 &  0.5 & 0.44 \\
${\rm RMSE}$  & 1.12  & 1.12 & 0.047 & 0.049 & 0.10 & 0.11\\
$\overline{{\rm RMSE}}$ & 1.12 & 1.12 &  0.055 & 0.058 & 0.12 & 0.14\\
\end{tabular}
}
\end{table}

We obtain a discretized trajectory $\beta_k = {\rm Tr}((L_2 + L_2^\dagger)\rho_k) + \eta_k$ from homodyne measurement on the second output port with sampling time $10^{-3}$ on Qutip \cite{johansson2012qutip}. Here, $\rho_k$ is the state at time $k$ and $\eta_k$ is a quantum Gaussian white noise. Our goal is to employ RCs to separate the ``signal part'' $\alpha_k = {\rm Tr}((L_2 + L_2^\dagger)\rho_k)$ from the highly noisy $\beta_k$. We simulate $\beta_k$ starting in the vacuum state on a truncated Hilbert space of dimension $1000$ for $6.4$s and employ the data after $5$s until $\rho_k$ reaches a steady state, see Fig.~\ref{fig:kerr-cavity}(b). 

We set $L_w = 99$, $L_t = 1000$ and $L_v = 300$. ESN and QRC with $N=2$ achieve comparable ${\rm RMSE}$ and average ${\rm RMSE}$, their residual sample ACFs show no autocorrelation within 95\% CI and pass the Lilliefors test; see Table~\ref{table1} and Fig.~\ref{fig:kerr}. We report $0.5$ for p-values$\geq 0.5$. 
In Fig.~\ref{fig:kerr}(a)(b), we observe that the ESN and QRC predictions follow the signal $\alpha_k$ with a normalized root mean-squared error between them of $\sqrt{ \frac{\sum_{k=L_2+1}^{L} (\alpha_k - \hat{\textbf{y}}_k)^2}{\sum_{k=L_2+1}^{L} \alpha^2_k}}= 0.0047$ and $0.0048$, respectively. These suggest that QRC can act as a nonlinear Wiener filter, separating the signal $\alpha_k$ from the noisy time series $\beta_k$.

For all decoherence strengths, QRC obtains ${\rm RMSE}=1.12$ as the noiseless QRC, and its residuals show no autocorrelation and pass the Lilliefors test with $p$-value$\geq 0.5$. QRC under decoherence can still effectively extract $\alpha_k$ from $\beta_k$.

\begin{figure}[!ht]
\centering
\includegraphics[scale=0.8, trim={2.5em 0em 0.5em 0em}, clip]{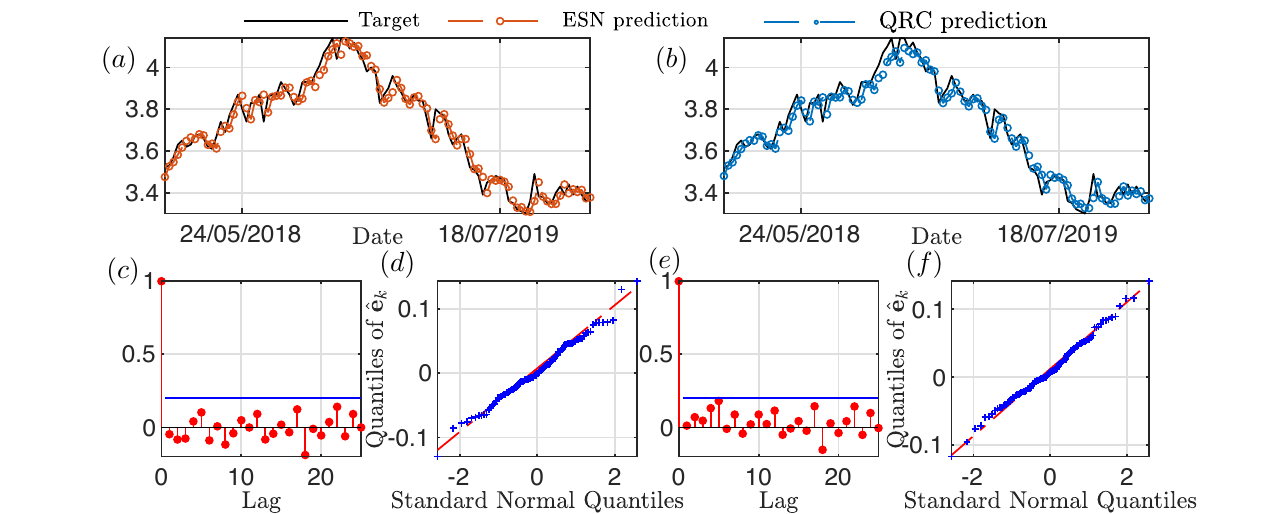}
\caption{The finance time series, (a) the ESN prediction and (b) the QRC prediction on validation data. (c) The ESN residual sample ACF. (d) The ESN residual Q-Q plot. (e) The QRC residual sample ACF. (f) The QRC residual Q-Q plot.}
\label{fig:finance}
\end{figure}

\subsection{Finance}
\label{subsec:mortgage}
This time series describes weekly 5/1-year adjustable rate mortgage average (2005-20) in the US \cite{mortgage}. After removing trend and seasonal components using the ``mstl'' \texttt{R} command \cite{forecast}, the data is highly correlated up to 300 lags. This example tests RCs' ability to model highly correlated data. We set $L_w=100$, $L_t=580$ and $L_v=100$. All RCs achieve comparable ${\rm RMSE}$. The QRC average ${\rm RMSE}$ is similar to the selected QRC's ${\rm RMSE}$, whereas the difference for ESN is more pronounced. All RCs achieve uncorrelated residuals (within 95\% CI) and pass the Lillefors test; see Table~\ref{table1} and Fig.~\ref{fig:finance}. These suggest that both RCs are capable of modeling this highly correlated time series.

QRC under GAD experiences an increased {\rm RMSE}. Despite this, QRC residuals under decoherence show no autocorrelation within 95\% CI and pass the Lilliefors test with $p$-value$\geq 0.35$. This suggests that decoherence does not significantly impact QRC's ability to model highly correlated data.

\subsection{Coupled electric drive system}
\label{sec:sys-id}
We employ RCs on modeling a single-input (i.e., $u_k \in \mathbb{R}$) single-output nonlinear system consists of two electric motors driving a pulley using a flexible belt \cite{pulley}. Input data is a pseudo-random binary sequence (persistently exciting) with amplitude $0.5$ and only $L=500$   data are available, presenting a challenge for RCs. 

We exploit spatial multiplexing, see \cite{chen2020temporal}, where outputs of two distinct and non-interacting RC members are combined linearly. The first member processes both $\textbf{y}_k$ and $\textbf{u}_k$ as described in Sec.~\ref{sec:para_est}, and the second only processes $\textbf{u}_k$. We label each member's parameters with subscripts $1,2$. For simplicity, we set $N=N_1=N_2$ so that dimensions of both members (or numbers of qubits) are the same, with $N=2$ preferred by FPE for all RCs. For the second ESN member, we set $C_2=0$ and $\sigma_{\rm m}(A_2)=0.7$, where the latter ensures convergence. For the multiplexed QRC, we set $\epsilon_1=0.5$ and $\epsilon_2=0.9$, and set the second QRC's CPTP map $T^{(2)}(u_k)$ as
\begin{equation*}
\begin{split}
T^{(2)}(u_k) =  g(u_k) T_1^{(2)}  + (1 -  g(u_k)) T^{(2)}_{2} ,
\end{split}
\end{equation*}
where $T^{(2)}_l (\rho_k) = U^{(2)}_l \rho_k (U^{(2)}_l)^\dagger$ for some arbitrary but fixed unitaries $U^{(2)}_l$ for $l=1, 2$; see \cite{arxiv} for the details. The second QRC is convergent by construction. For multiplexed RCs, output parameters of their second member are not constrained. 


\begin{figure}[!ht]
\centering
\includegraphics[scale=0.8, trim={2.5em, 0, 0, 0}, clip]{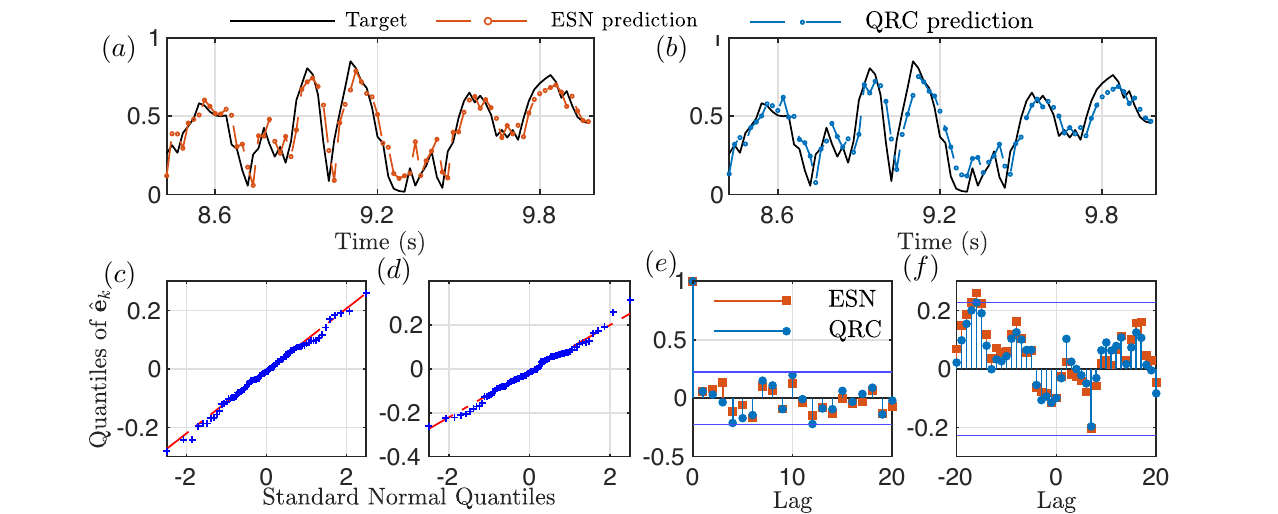}
\caption{The target output, (a) the ESN prediction and (b) the QRC prediction on validation data. (c) The ESN residual Q-Q plot. (d) The QRC residual Q-Q plot. (e) The ESN (blue) and QRC (red) sample ACF. (f) The ESN (blue) and QRC (red) sample cross-correlation between inputs and residuals.}
\label{fig:pulley}
\end{figure}

We set $L_w = 20$, $L_t = 400$ and $L_v = 79$. All RCs achieve comparable ${\rm RMSE}$ and average ${\rm RMSE}$, and they pass the Lilliefors test. Residuals of ESN and QRC are uncorrelated and independent of inputs (within 95\% CI); see Table~\ref{table1} and Fig.~\ref{fig:pulley}. Under both decoherence channels, residuals of QRC show autocorrelation at lag 1. Despite this, the increase in the QRC ${\rm RMSE}$ is small, by at most $0.01$ compared to the noiseless QRC.

\section{Conclusion} 
\label{sec:conclu}
We have introduced convergent reservoir computers with output feedback as stationary and ergodic NARX($\infty$) models. Our approach can harness nonlinear dynamical systems for temporal information processing, making them  versatile for nonlinear stochastic modeling. 
Numerical experiments demonstrate that these reservoir computers with a few tunable parameters are adequate in modeling diverse data sets.

Many exciting problems remain open for future research, such as improving the modeling performance through reservoir design and developing model selection methods for the proposed scheme. With this view, this work opens the potential for reservoir computing paradigm to tackle traditional challenges encountered in control and time series modeling, further bridging these scientific disciplines.


\bibliographystyle{IEEEtran}
\bibliography{ts_brief}

\end{document}